\definecolor{m1}{rgb}{0, 0.4470, 0.7410}
\definecolor{m2}{rgb}{0.8500, 0.3250, 0.0980}
\definecolor{gfl}{rgb}{0.6350 0.0780 0.1840}
\definecolor{gfr}{HTML}{00A99A}
\definecolor{lines}{rgb}{0.75, 0, 0.75}
\definecolor{econ}{rgb}{0.4660, 0.6740, 0.1880}
\newcommand{\R}{\mathbb{R}}
\newcommand{\N}{\mathbb{R}}
\newcommand{\col}[1]{\text{col}\{#1\}}
\newcommand{\diag}[1]{\text{diag}\{#1\}}
\newcommand{\mat}[1]{\begin{bmatrix}#1\end{bmatrix}}
\newcommand{\inci}{M}
\newcommand{\ext}{\text{elec}}
\newcommand{\econ}{\text{econ}}
\newcommand{\reff}{\text{ref}}
\newcommand{\rl}[1]{\bar{#1}}
\newcommand{\err}[1]{\tilde{#1}}
\newcommand{\numOfnodes}{n}
\newcommand{\numOflines}{m}
\newcommand{\numOfDGU}{d}
\newcommand{\numOfstates}{n_{\text{cl}}}
\newcommand{\setOfnodes}{\mathcal{B}}
\newcommand{\setOfedges}{\mathcal{E}}
\newcommand{\numOfMG}{n_{\text{mg}}}
\newcommand{\Pload}{p_{\text{L}}}
\newcommand{\iL}[1]{\ifthenelse{\equal{#1}{}}{i_{{\text{L}}}}{i_{{\text{L},#1}}}}
\newcommand{\V}{\mathcal{V}}
\newcommand{\I}{\mathcal{I}}
\newcommand{\X}{\mathcal{X}}
\newcommand{\Pl}{\mathcal{P}}
\newcommand{\M}{\mathcal{M}}
\newcommand{\Lpi}[1]{\ifthenelse{\equal{#1}{}}{L_\pi}{L_{{\pi,#1}}}}
\newcommand{\Rpi}[1]{\ifthenelse{\equal{#1}{}}{R_\pi}{R_{{\pi,#1}}}}
\newcommand{\Cpi}[1]{\ifthenelse{\equal{#1}{}}{C_\pi}{C_{{\pi,#1}}}}
\newcommand{\Lf}[1]{\ifthenelse{\equal{#1}{}}{L_\text{f}}{L_{{\text{f},#1}}}}
\newcommand{\Rf}[1]{\ifthenelse{\equal{#1}{}}{R_\text{f}}{R_{{\text{f},#1}}}}
\newcommand{\Cf}[1]{\ifthenelse{\equal{#1}{}}{C_\text{f}}{C_{{\text{f},#1}}}}
\newcommand{\ir}[1]{\ifthenelse{\equal{#1}{}}{i_\text{f}}{i_{{\text{f},#1}}}}
\newcommand{\glob}{\text{ext}}
\newcommand{\loc}{\text{loc}}
\newacronym{res}{RES}{renewable energy source}
\newacronym{sos}{SOS}{Sum-of-squares}
\newacronym{dgu}{DGU}{distributed generation unit}
\newacronym{lmi}{LMI}{linear matrix inequality}
\newtheorem{lemma}{Proposition}
\newtheorem{remark}{Remark}
\newtheorem{definition}{Definition}
\newtheorem{theorem}{Theorem}
\title{\LARGE \bf
	Passivity-based economic ports for optimal operation of networked DC microgrids 
}
\author{Pol Jané-Soneira, Albertus J. Malan, Ionela Prodan and Sören Hohmann
	\thanks{P. Jané-Soneira, A. J. Malan and S. Hohmann are with the Institute of Control Systems, Karlsruhe Institute of Technology, 76131, Karlsruhe, Germany. Corresponding author is Pol Jané-Soneira, {\tt \scriptsize pol.soneira@kit.edu}.}%
	\thanks{I. Prodan is with the Univ. Grenoble Alpes, Grenoble INP, LCIS, F-26000, Valence, France. I. Prodan’s research benefited from the support of the FMJH Program PGMO and from the support to this program from EDF.}%
}
\begin{document}

\maketitle
\thispagestyle{empty}
\pagestyle{empty}

\begin{abstract}
    In this paper, we introduce the novel concept of economic ports, allowing modular and distributed optimal operation of networked microgrids. Firstly, we design a novel price-based controller for optimal operation of a single microgrid and show asymptotic stability. Secondly, we define novel physical and economic interconnection ports for the microgrid and study the dissipativity properties of these ports. Lastly, we propose an interconnection scheme for microgrids via the economic ports. This interconnection scheme requires only an exchange of the local prices and allows a globally economic optimal operation of networked microgrids at steady state, while guaranteeing asymptotic stability of the networked microgrids via the passivity properties of economic ports. The methods are demonstrated through various academic examples. 
\end{abstract}

\section{Introduction} \label{sec:intro}

Power systems are undergoing a transformation towards a sustainable and emission-free electrical supply based on renewable energies. Due to the integration of volatile renewable energy sources and the removal of large-scale fossil fuel generators, power systems are experiencing a reduction in grid inertia and are hence increasingly facing grid stability issues. With the ongoing removal of large-scale generators, the \glspl{dgu} need to contribute to grid stability. Due to the large number of small-scale \glspl{dgu} in an energy system, both the optimal coordination of the \glspl{dgu} and the plug-and-play operation ensuring stability are crucial.  

In literature, many approaches propose a passivity-based controller design for \glspl{dgu} \cite{nahata2020passivity,strehle2020,watson2019stability}. These regulators achieve an offset-free regulation of a given voltage reference and have desirable plug-and-play properties while guaranteeing asymptotic stability of the overall interconnected system via passivity. Recently, extensions have been proposed in order to achieve current- \cite{nahata2022current} or power-sharing \cite{malan2023passivity} within the passivity-based framework, or approximate power-sharing considering a simultaneous voltage- and frequency control in AC systems \cite{ojo2023distributed}. Although allowing plug-and-play operation and ensuring asymptotic stability, passivity-based methods are in general purely decentralized approaches which cannot achieve an economically optimal operation or steer the system to an economically optimal steady state.  

Addressing this issue, \cite{stegink2016unifying, kolsch2020distributed} propose distributed passivity- and optimization-based controllers for a microgrid in port-Hamiltonian form that is able to steer the system to an economically optimal steady state. The intrinsic, favorable passivity properties of the port-Hamiltonian system enables plug-and-play operation while ensuring asymptotic stability. However, in both approaches, the whole microgrid is modeled as a synchronous generator, which is interconnected with other microgrids via lossless, static lines. These simplifications and assumptions, although allowing important theoretical contributions, hamper the application to low inertia microgrids with lossy lines, which will adopt a crucial role in future power systems. In \cite{jane2023mpc,soneira2022optimal}, a model predictive controller exploiting the passivity properties of the underlying controllers for ensuring modular stability is proposed. This method achieves an optimal operation, but requires considerable amount of computing power each time step and does not allow a distributed operation. In \cite{dorfler2015breaking,zhao2015distributed}, optimization-based controllers for AC and DC microgrids with droop-controllers are proposed. Although these methods are not based on passivity, asymptotic stability of an economically optimal steady state together with plug-and-play capabilities are shown. However, this again comes at the cost of considering a system model with limiting assumptions and approximations, e.g. static lines and single capacitance dynamics as DC microgrid model or a simple oscillator as AC microgrid node dynamics. In particular, dynamics of the \glspl{dgu}, transmission lines or nonlinear loads are not considered. 

\textit{Contributions:} We propose an optimization-based controller for DC microgrids that steers a microgrid to an economically optimal steady state in a distributed manner without knowledge of the loads or transmission line parameters. Conditions for asymptotic stability are provided. We further leverage these results to study the interconnection of various microgrids by introducing the novel concept of passivity-based economic ports. These economic ports have a cyber-physical nature, which differs from the typical physical interconnection ports defined with physical variables like voltages and currents in passivity-based control. The economic port allows interconnecting different microgrids on an information level to achieve overall economic optimality while ensuring plug-and-play stability in a distributed manner. 

The remainder of this paper is structured as follows. In Section~\ref{sec:system}, the system model for a general, converter-based microgrid considered in this work is presented. The distributed optimization-based controller design is presented in Section~\ref{sec:controller}. In Section~\ref{sec:ports}, we introduce the electric and economic ports allowing the microgrids to interconnect on a physical- and information-basis. In Section~\ref{sec:simulation}, the performance of the proposed controller for a cluster of networked microgrids is illustrated via simulations. 

\textit{Notation:} %
Lowercase letters $x \in \R^{n}$ represent vectors, and uppercase letters $X \in \R^{n\times n}$ represent matrices. The transpose of a vector $x \in \R^n$ is written as $x^\top$. The vector $x = \col{x_i}$ and matrix $X = [x] = \diag{x_i}$ are the $n\times 1$ column vector and $n \times n$ diagonal matrix of the elements $x_i$, $i = 1,\dots, n$, respectively. Let $I_{n}$ denote the $n \times n$ identity matrix and $1_n \in \R^n$ a vector of ones. Calligraphic letters $\X$ represent sets, and $\X \times \I$ denotes the cartesian product of the two sets. For vectors $x_{\min}$, $x_{\max} \in \R^n$, the set $\X = [x_{\min}, x_{\max}]$ is a shorthand notation for the convex polytope $\X = \{ x \in \R^n \, \vert \, x_{\min} \leq x \leq x_{\max} \}$, where $\leq$ holds component-wise. A directed graph is denoted by $\mathcal{G}(\setOfnodes,\mathcal{E})$, where $\setOfnodes$ is the set of nodes and $\mathcal{E} \subseteq \setOfnodes\times \setOfnodes$ the set of edges. The cardinality for a set $\setOfnodes$ is denoted by $\vert \setOfnodes \vert$. The incidence matrix $\inci \in \R^{\vert \setOfnodes \vert \times \vert \mathcal{E} \vert }$ is defined as $M = (m_{ji})$ with $m_{ji} = -1$ if edge $e_j \in \mathcal{E}$ leaves node $v_i \in \setOfnodes$, $m_{ji} = 1$ if edge $e_j \in \mathcal{E}$ enters node $v_i \in \setOfnodes$, and $m_{ji} = 0$ otherwise.

\section{System model} \label{sec:system}

In this paper, we consider a set of microgrids $k \in \M = \{ 1,\dots,\numOfMG\}$, each comprising a set $\setOfnodes^k$ of $\numOfnodes^k = \vert \setOfnodes^k \vert$ electrical buses or nodes connected via a set $\mathcal{E}^k$ of $\numOflines^k = \vert \mathcal{E}^k \vert$ electrical lines. Defining an arbitrary line current direction over the microgrid power lines, we describe the network topology of each microgrid with the directed graph $\mathcal{G}^k(\setOfnodes^k,\mathcal{E}^k)$, where $\setOfnodes^k$ is the set of nodes and $\mathcal{E}^k$ of edges. 
In the following, we present the dynamic models of the DC microgrid components. We consider nodes $i \in \setOfnodes_{\text{L},k} \subseteq \setOfnodes^k$ having only a nonlinear load, and nodes $i \in \setOfnodes_{\text{DGU},k} \subseteq \setOfnodes^k$ having additionally a \gls{dgu}, with $\setOfnodes_{\text{L},k} \cup \setOfnodes_{\text{DGU},k} = \setOfnodes^k$. When equipped with a \gls{dgu}, the node voltage $v_i$ can be directly influenced. 

In the remainder of this section, the dynamic models of the microgrid components are presented. The microgrid index $k$ (always displayed as superscript) is omitted for simplicity until further notice, since the same microgrid structure holds for all $k \in \M$ (microgrids may have different sizes, topologies and parameters). 

\subsection{Distributed generation unit (DGU) node} \label{sec:dgu}

A schematic representation of a node composed of \gls{dgu} and nonlinear load is shown in Figure~\ref{fig:dgu}. It consists of a Buck converter supplying a voltage $v_{\text{t},i} \in \R_{\geq 0}$, a nonlinear load with current $\iL{i}(v_i) \in \R$, and a filter with inductance $\Lf{i} \in \R_{>0}$, capacitance $\Cf{i} \in \R_{> 0}$, and resistance $ R_{\text{f},i} \in \R_{> 0}$. The Buck converter and the filter form a \gls{dgu}. Each bus $i \in \setOfnodes$ has two states, the node voltage $v_i$ and the filter current $i_{\text{f},i}$. The Buck converter voltage $v_{\text{t},i}$ is defined as the system input. Here, we use the well known averaging model for the Buck converter, by which we disregard the switching behavior~\cite{middlebrook1977}. The dynamics for every bus $i \in \setOfnodes_{\text{DGU}}$ are~\cite{nahata2020passivity,tucci2016}
\begin{subequations} \label{eq:dgu}
    \begin{align}
        \Cf{i} \dot{v}_{i} & = \ir{i} - \iL{i}(v_i) - i_{\text{ext},i} \label{eq:voltage_dynamics_dgu}\\   
        \Lf{i} \dot{i}_{\text{f},i} & = v_{\text{t},i} -\Rf{i} \ir{i} - v_i ,
    \end{align}
\end{subequations}
where $i_{\text{ext},i}$ is the cumulative current injected by interconnecting lines. The load is modeled with a voltage-dependent current source 
\begin{align} \label{eq:load}
	\iL{i}(v_i) = y_i v_i + \frac{p_i}{v_i} + \hat{i}_{\text{L},i},
\end{align} 
where $y_i \in \R_{\geq0}$ is the admittance of a constant resistive load, $p_i \in \R$ is a constant power load, and $\hat{i}_{\text{L},i} \in \R$ is a constant current load. Thus, the load is a linear combination of a constant resistive, a constant power and constant current load, which is also known as a nonlinear ZIP load \cite{kundur2022power}.  
\begin{figure}[h]
	\centering
	\begin{tikzpicture}

\node[draw, align=center, minimum width = 1.5cm, minimum height=2cm] (LE) at (0,0) {Buck \\ converter};
\draw (2,0.75) -- ($(LE.east)+(0,0.75)$); 
\draw (2,-0.75) -- ($(LE.east)+(0,-0.75)$); 
\draw[fill=white] (2,0.75) circle (.1cm);
\draw[fill=white] (2,-0.75) circle (.1cm);

\node[draw, fill=black, minimum width=.6cm] (Lf) at (3,0.75) {};
\node[above] (Lf_name) at (3,0.8) {$L_{{\rm f},i}$};
\node[draw, fill=white, minimum width=.6cm] (Rf) at (4,0.75) {};
\node[above] (Rf_name) at (4,0.8) {$R_{{\rm f},i}$};
\node[draw,minimum width=0.6cm, minimum height=0.3cm] (Cf) at (5,0) {};
\node[fill=white,minimum width=0.7cm,minimum height=0.29cm] at (5,0) {};
\node at (4.4,0) {$C_{{\rm f},i}$};

\node[draw, circle, fill=white, minimum height=.5cm] (Rl) at (6,0) {};
\draw[-latex] ($(Rl.north)+(0,-0.1)$) --  ($(Rl.south)+(0,0.1)$);
\draw[-latex] ($(Rl.north)+(0,0.5)$) --  ($(Rl.north)+(0,0.2)$) node[right] {$i_{{\rm L},i}$};

\draw (2.1,0.75) -- (Lf);
\draw (Lf) -- (Rf);
\draw (Rf) -| (Cf);
\draw (2.1,-0.75) -| (Cf);
\draw ($(Cf)+(0,0.75)$) -| (Rl);
\draw ($(Cf)+(0,-0.75)$) -| (Rl);

\draw[fill=white] (7,0.75) circle (.1cm);
\draw[fill=white] (7,-0.75) circle (.1cm);
\draw (6.9,0.75) -- ($(Rl)+(0,0.75)$); 
\draw (6.9,-0.75) -- ($(Rl)+(0,-0.75)$); 
\draw[-latex] (6.8,0.75) -- (6.4,0.75) node[above] {$i_{{\rm ext},i}$};


\draw[-latex] (1.5,0.6)  -- node[left] {$v_{{\rm t},i}$} (1.5,-.6);
\draw[-latex] (1.5,0.75)  -- node[above] {$i_{{\rm f},i}$} (1.8,0.75);
\draw[fill=black, above] (5,0.75) node {$v_i$} circle (.05cm);

\end{tikzpicture}
	\caption{Electric scheme of a DGU and load in node $i \in \setOfnodes_{\text{dgu}}$.}
	\label{fig:dgu}
\end{figure}

\subsection{Grid-forming and grid-following control}

The \gls{dgu} described in Section~\ref{sec:dgu} is normally equipped with a grid-forming or a grid-following controller \cite{rocabert2012control}. Grid-forming controllers inject the necessary filter current $\ir{i}$ (indirectly power) in order to regulate the node voltage $v_i$ to a desired voltage reference $v_\reff$, and stabilize thus the grid-voltages regardless of the load disturbance or volatile power injections. Grid-following controllers set the voltage $v_{\text{t},i}$ such that a given power reference $p_\reff$ (indirectly the filter current $\ir{i}$) is injected, without considering the resulting node voltage level $v_i$. Grid-forming \glspl{dgu} are used to achieve robust voltage stability, and grid-following \glspl{dgu} to inject a certain amount of power irrespective of grid stability, e.g. for achieving optimal dispatch. Therefore, in this work, we have exactly one grid-forming \gls{dgu} in every microgrid, and an arbitrary number of grid-following \glspl{dgu}. The first \gls{dgu} $1 \in \setOfnodes_{\text{DGU}}$ is defined without loss of generalization as the grid-forming \gls{dgu}, while ${i \in \setOfnodes_{\text{DGU}}\setminus\{ 1 \} }$ are grid-following \glspl{dgu}. 

The grid-forming controller is taken from \cite{nahata2020passivity} and is designed by introducing an error state~\eqref{eq:error_vref} and a state feedback as
\begin{subequations} \label{eq:gfr}
	\begin{align}
		v_{\text{t},1} & = k_{\alpha,1} v_1 + k_{\beta,1} \ir{1} + k_{\gamma,1}  e_1 \\
		\dot{e}_1 & = v_{\reff,1} - v_1, \label{eq:error_vref}
	\end{align}
\end{subequations}
where $k_{\alpha,1}\in \R$, $k_{\beta,1}\in \R$ and $k_{\gamma,1} \in \R$ are the controller parameters. Similarly, the grid-following controller is designed here as 
\begin{subequations} \label{eq:gfl}
	\begin{align}
		v_{\text{t},i} & = k_{\alpha,i} v_i + k_{\beta,i} \ir{i} + k_{\gamma,i}  e_i \\
		\dot{e}_i & = p_{\reff,i} - v_i \ir{i}, \label{eq:error_pref}
	\end{align}
\end{subequations}
for all $i \in \setOfnodes_{\text{dgu}} \setminus \{1\}$, using the injected power error~\eqref{eq:error_pref} instead. Note that the grid-following \gls{dgu} introduces a nonlinearity when computing the injected power $v_i \ir{i}$ in \eqref{eq:error_pref}. Applying either~\eqref{eq:gfr} or~\eqref{eq:gfl} to the \gls{dgu}~\eqref{eq:dgu} thus yields
\begin{subequations}
    \begin{align}
        \Cf{i} \dot{v}_{i} & = \ir{i} - \iL{i}(v_i) - i_{\text{ext},i} \\   
        \dot{i}_{\text{f},i} & = \alpha_i v_i + \beta_i \ir{i} + \gamma_i e_i \\
		\eqref{eq:error_vref} & \; \text{or} \; \eqref{eq:error_pref},   
    \end{align}
\end{subequations}
depending if it is a grid-forming or grid-following \gls{dgu}, with the variables $\alpha_i = \frac{k_{\alpha,i}-1}{\Lf{i}}$, $\beta_i = \frac{k_{\beta,i}-\Rf{i}}{\Lf{i}} $ and $\gamma_i = \frac{k_{\gamma,i}}{\Lf{i}}$ containing the controller parameters.  

\subsection{Load node}

A load node is modeled as a \gls{dgu} node without the filter and Buck converter. The dynamics are governed by the nonlinear dynamic equation 
\begin{align}\label{eq:voltage_dynamics_load}
	\Cf{i} \dot{v}_{i} & = - \iL{i}(v_i) - i_{\text{ext},i} \qquad \forall i \in \setOfnodes_{\text{L}},
\end{align}
where $\iL{i}(v_i)$ is the load as in \eqref{eq:load}.

\subsection{Power lines}

The power lines $j \in \mathcal{E}$ interconnecting buses $i,h \in \setOfnodes$ are modeled with the $\pi$-equivalent circuit~\cite{machowski2020power}, as shown in Fig.~\ref{fig:powerline}. It is composed of a series inductance $\Lpi{j} \in \R_{>0}$ and resistance $\Rpi{j}\in \R_{>0}$, and two parallel capacitances $\frac{\Cpi{j}}{2}\in \R_{>0}$. 
\begin{figure}[h]
	\centering
	\includegraphics{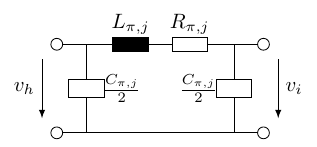}
	\caption{Electric scheme of a power line in the pi equivalent circuit.}
	\label{fig:powerline}
\end{figure}
Note that the line capacitance is connected in parallel to the bus filter capacitance of the bus $i$ or $h \in \setOfnodes$. Thus, the line capacitances are added to the \gls{dgu} filter capacitances. The line dynamics are thus described by 
\begin{align}\label{eq:line_dynamics} 
    \Lpi{j} \dot{i}_{\pi,j} = -\Rpi{j} i_{\pi,j} + v_{\Delta,j},
\end{align}
where $v_{\Delta,j} = v_i - v_h$ is the input and $v_i$, $v_j$ are the bus voltages.

\subsection{Microgrid Model}

The whole microgrid is composed of $\numOfnodes = \vert \setOfnodes \vert$ electrical buses with $\numOfDGU = \vert \setOfnodes_{\text{DGU}} \vert \geq 1$ \glspl{dgu}, interconnected by $\numOflines = \vert \setOfedges \vert$ power lines according to the graph $\mathcal{G}(\setOfnodes,\mathcal{E})$ with incidence matrix $M\in\R^{\numOfnodes\times \numOflines}$. With respect to the $\numOfDGU$ \glspl{dgu}, recall that there is one grid-forming \gls{dgu} which stabilizes the grid voltages and $\numOfDGU-1$ grid-following \glspl{dgu} that inject power according to an economic objective function, which will be specified in Section~\ref{sec:controller}. The microgrid model reads
\begin{subequations} \label{eq:microgrid_dynamics}
    \begin{align}
        \Cf{} \dot{v} & = I_{\text{f}} \ir{} - \iL{}(v) - \inci i_\pi \label{eq:eq1} \\   
        \dot{i}_{\text{f}} & = \alpha I_{\text{f}}^\top v + \beta \ir{} + \gamma e \\
        \dot{e} & = I_v v + I_p [I_{\text{f}}^\top v] \ir{} + \mat{ v_{\text{ref}} \\ p_{\text{ref}}} \\
        \Lpi{} \dot{i}_\pi &= -\Rpi{} i_{\pi} + \inci^T v, \label{eq:line_dynamics_1}
    \end{align}
\end{subequations}
where $\alpha = [\alpha_i]$, $\beta = [\beta_i]$ and $\gamma = [\gamma_i]$ contain the control parameters, $\Cf{} = [\Cf{i}]$, $\Rf{}= [\Rf{i}]$, $\Lf{} = [\Lf{i}]$, $\iL{} = [\iL{i}]$, $\Rpi{} = [\Rpi{j}]$ and $\Lpi{} = [\Lpi{j}]$ are the filter, load and line parameters, and $v = \col{v_i} \in \R^\numOfnodes$, $\ir{} = \col{\ir{i}} \in \R^\numOfDGU$, $e = \col{e_i} \in \R^{\numOfDGU}$ and $i_\pi = \col{i_{\pi,j}}$ are the stacked states of the \glspl{dgu} $i \in \setOfnodes$ and power lines $j \in \setOfedges$. The voltage and power references $v_\reff \in \R_{>0}$ and $p_\reff \in \R^{\numOfDGU-1}$ are inputs, where $\numOfDGU$ defines the number of inputs of the microgrid. The matrix $I_{\text{f}} \in \R^{\numOfnodes\times \numOfDGU}$ is a permutation matrix assigning the filter currents of $\numOfDGU$ \glspl{dgu} to the correct $\numOfnodes \geq \numOfDGU$ nodes. The matrices $I_v = \diag{1,0,\dots,0}\in \R^{\numOfDGU\times \numOfDGU}$ and $I_p = \diag{0,1,\dots,1} \in \R^{\numOfDGU\times \numOfDGU}$ are diagonal matrices such that the correct error signals are induced for the integrator states\footnote[1]{It holds that $I_v+I_p = I$, since all DGUs are either grid-forming or grid-following.} as in~\eqref{eq:error_vref} and~\eqref{eq:error_pref}. Note that in~\eqref{eq:eq1} and~\eqref{eq:line_dynamics_1} it has been taken into account that the voltage drop over the power lines $v_\Delta$ can be expressed as $v_\Delta = \inci^\top v$ and the current drawn from the buses by the power lines as $i_{\text{ext}} =  \inci i_\pi$. The output $y$ of the system is defined as the filter current of the grid-forming \gls{dgu}, which is set as the first filter current without loss of generality, i.e. $y=\ir{1}$. 

In the next section, a price-based controller is designed in order to determine $p_{\text{ref}}$ such that an optimal steady state is achieved. 

\section{Price-based controller design} \label{sec:controller} 

We aim to design a controller which (i) steers the microgrid~\eqref{eq:microgrid_dynamics} to an (unknown) economically optimal operation point where the grid-forming \gls{dgu} does not inject power, and (ii) has distributed nature and does not require any knowledge of loads or line parameters. 

\subsection{Controller design: Optimality model}
Inspired by the Linear-Convex Optimal Steady-State Control \cite{lawrence2020linear}, we introduce an optimality model, which describes an optimal steady state where property (i) is fulfilled:  
\begin{subequations} \label{eq:optimization_model}
    \begin{align}
        \min_{p_\reff} \; &  \sum_{i=1}^{\numOfDGU-1} f_i(p_{\reff,i}) \\
        \text{s.t.} \; & \sum_{i=1}^{\numOfDGU-1} p_{\reff,i} = \Pload. \label{eq:power_equality}
    \end{align}
\end{subequations}
The function $f_i: \R \rightarrow \R$ represents the cost of the power infeed of the respective grid-following \glspl{dgu}, which is assumed to be convex and quadratic in the paper at hand, i.e. $f(p_{\reff,i}) = q_i p_{\reff,i}^2 + r_i p_{\reff,i} + s_i$ with $q_i, r_i, s_i \in \R$, $q_i > 0$. The variable $\Pload$ comprises the sum of the power consumed by all loads and the losses of the microgrid. Thus, \eqref{eq:power_equality} ensures power balance. 
The KKT conditions \cite{wright1999numerical} for~\eqref{eq:optimization_model} are 
\begin{subequations}
	\begin{align}
		0 &= \nabla f_i(p_\reff) + \lambda \qquad \forall i \in \{1,\dots,d-1\} \\ 
		0 &= \sum_{i=1}^{\numOfDGU-1} p_{\reff,i} - \Pload,
	\end{align}
\end{subequations}
where $\lambda \in \R$ is the Lagrange multiplier for the constraint \eqref{eq:power_equality}, and using a primal-dual gradient method \cite{arrow1958studies,feijer2010stability} with positive tuning parameters $\tau_i$ and $\kappa$, we get
\begin{subequations} \label{eq:controller}
	\begin{align}
		\dot{p}_{\reff,i} &= - \tau_i(\nabla f_i(p_{\reff,i}) - \lambda) \quad \forall i \in \{1,\dots,d-1\} \label{eq:marginal_cost} \\ 
		\dot{\lambda} &= \kappa(\Pload - \sum_{i=1}^{\numOfDGU-1} p_{\reff,i}) . \label{eq:price_forming}
	\end{align}
\end{subequations}
The multiplier $\lambda$ can be interpreted as the electrical power price; if the load $\Pload$ is greater than the power supplied by the grid-following \glspl{dgu}, the price in~\eqref{eq:price_forming} increases and vice-versa. Equation~\eqref{eq:marginal_cost} means that the grid-following \glspl{dgu} inject power such that their marginal costs equal the power price. This is the best solution for rational decision-makers, since feeding in more power would lead to less economic benefit per kW.  

With controller \eqref{eq:controller}, property (i) is fulfilled, since at steady state, every grid-following \gls{dgu} produces at marginal cost and the grid-forming \glspl{dgu} inject no power. However, property (ii) requires more attention. Even if~\eqref{eq:marginal_cost} can be computed by every grid-following \gls{dgu} in distributed manner, the price-forming~\eqref{eq:price_forming} uses the load $\Pload$ and the sum of the power injection $p_{\reff,i}$ of the grid-following \glspl{dgu}, both being system-wide knowledge. To circumvent that, we present the following proposition.
\begin{lemma} \label{lem:power_gfr}
    Let $v_{\text{ref}} \in \R_{>0}$. Let all the load in microgrid~\eqref{eq:microgrid_dynamics}, including the loads and transmission losses, be denoted by $\Pload$. Then, at steady state, we have $y=0$ iff
    \begin{equation} \label{eq:gfl_load}
        \sum_{i=1}^{\numOfDGU-1} p_{\reff,i} = \Pload. 
    \end{equation}
\end{lemma}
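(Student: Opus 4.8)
The plan is to derive a steady-state power balance from the node-voltage dynamics and then identify the physical meaning of each term. First I would set all time derivatives in~\eqref{eq:microgrid_dynamics} to zero and left-multiply the steady-state version of~\eqref{eq:eq1} by $v^\top$, obtaining
\[
0 = v^\top I_{\text{f}} \ir{} - v^\top \iL{}(v) - v^\top \inci i_\pi .
\]
Here $v^\top I_{\text{f}} \ir{} = \sum_{i \in \setOfnodes_{\text{DGU}}} v_i \ir{i}$ is the total power injected by the DGUs (the permutation matrix $I_{\text{f}}$ assigns each filter current to its node), and $v^\top \iL{}(v) = \sum_i v_i \iL{i}(v_i)$ is the total power drawn by the ZIP loads.

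Next I would show that the remaining term reduces to the transmission losses. From the steady-state line dynamics~\eqref{eq:line_dynamics_1} we have $\inci^\top v = \Rpi{} i_\pi$, so
\[
v^\top \inci i_\pi = (\inci^\top v)^\top i_\pi = i_\pi^\top \Rpi{} i_\pi = \sum_{j} \Rpi{j}\, i_{\pi,j}^2 \geq 0 ,
\]
which is exactly the resistive loss in the lines; the line inductances and the capacitances (absorbed into $\Cf{}$) carry no net power at steady state. Identifying the total load plus losses as $\Pload = v^\top \iL{}(v) + i_\pi^\top \Rpi{} i_\pi$, the balance collapses to $v^\top I_{\text{f}} \ir{} = \Pload$.

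I would then split the DGU injection into its grid-forming and grid-following parts. At steady state the grid-following integrator~\eqref{eq:error_pref} gives $v_i \ir{i} = p_{\reff,i}$ for each of the $\numOfDGU-1$ grid-following units, so their combined injection equals $\sum_{i=1}^{\numOfDGU-1} p_{\reff,i}$. Writing $y = \ir{1}$ for the grid-forming unit, the balance becomes
\[
v_1 y = \Pload - \sum_{i=1}^{\numOfDGU-1} p_{\reff,i} .
\]
Finally, the grid-forming integrator~\eqref{eq:error_vref} forces $v_1 = v_\reff > 0$ at steady state, hence $v_1 \neq 0$ and therefore $y = 0 \iff v_1 y = 0 \iff \sum_{i=1}^{\numOfDGU-1} p_{\reff,i} = \Pload$, which is the claimed equivalence.

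The step I expect to be the main obstacle is making the identification $\Pload = v^\top \iL{}(v) + i_\pi^\top \Rpi{} i_\pi$ rigorous: the statement defines $\Pload$ only informally as ``loads and transmission losses,'' so the crux is recognizing that at steady state the sole dissipative contribution is the line resistance $\Rpi{}$, with all inductive and capacitive storage terms vanishing from the balance. A secondary subtlety is that the strict positivity $v_\reff > 0$ is precisely what upgrades the \emph{power} statement $v_1 y = 0$ to the \emph{current} statement $y = 0$, and it must be invoked explicitly rather than taken for granted.
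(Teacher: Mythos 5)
Your proof is correct and follows essentially the same route as the paper's: left-multiply the steady-state voltage equation by $v^\top$, identify the load and resistive line-loss terms as $\Pload$ (you write the losses as $i_\pi^\top \Rpi{} i_\pi$ where the paper writes the equivalent $v^\top \inci \Rpi{}^{-1} \inci^\top v$), use the grid-following integrators to replace $v_i \ir{i}$ by $p_{\reff,i}$, and invoke $v_1 = v_\reff > 0$ to pass from $v_1 y = 0$ to $y = 0$. No substantive difference.
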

\begin{proof}
    In any steady state with $v_\reff\in\R_{>0}$ and $p_\reff \in\R^{\numOfDGU-1}$, it holds from~\eqref{eq:microgrid_dynamics}
    \begin{subequations}
        \begin{align}
            0 &= I_{\text{f}} \ir{} -\iL{}(v) - \inci i_\pi \label{eq:ss_voltage} \\
            0 &= \alpha I_{\text{f}}^\top v + \beta \ir{} + \gamma e \\
            0 &= v_\reff - v_1 \\
            0 &= p_\reff - I_p [I_{\text{f}}^\top v] \ir{} \label{eq:power_error}\\
            0 &= \inci^\top v - \Rpi{} i_\pi. \label{eq:ss_powerline}
        \end{align}    
    \end{subequations}
    Rearranging \eqref{eq:ss_powerline} to $i_\pi = R^{-1}M^\top v$ and inserting it in \eqref{eq:ss_voltage}, we have
    \begin{equation}\label{eq:ss_voltage_1}
        0 = I_{\text{f}} \ir{} - \iL{}(v) - \inci \Rpi{}^{-1} \inci^\top v,
    \end{equation}
    and multiplying \eqref{eq:ss_voltage_1} with $v^\top$ from the left, we have
    \begin{equation}\label{eq:ss_voltage_2}
        0 = v^\top I_{\text{f}} \ir{}  \underbrace{- v^\top \iL{}(v) - v^\top \inci \Rpi{}^{-1}\inci^\top v}_{-\Pload}. 
    \end{equation}
    Taking into account that the last two terms in \eqref{eq:ss_voltage_2} denote the sum of all loads and the power line losses $\Pload$, \eqref{eq:ss_voltage_2} simplifies to
    \begin{equation}\label{eq:ss_voltage_3}
        v^\top I_{\text{f}} \ir{} = \Pload. 
    \end{equation}
	Sum over~\eqref{eq:power_error} to obtain $v^\top I_{\text{f}} I_p \ir{} = \sum_i p_{\reff,i}$, and note that $v^\top I_{\text{f}} \ir{} = v^\top I_{\text{f}} I_p \ir{} + v^\top I_{\text{f}} I_v \ir{}$. Insert both equations in the left-hand side of~\eqref{eq:ss_voltage_3}. Then, for $v_1 = v_\reff >0$, \eqref{eq:gfl_load} follows iff $y = \ir{1} = 0$, since $v^\top I_{\text{f}} I_v \ir{} = v_1 \ir{1}$.
\end{proof}

Applying Proposition~\ref{lem:power_gfr} to \eqref{eq:controller}, we can use the grid-forming \gls{dgu} current $y = \ir{1}$ for the price-forming mechanism, since it is a measure for the unmet power demand in the microgrid, i.e. 
\begin{subequations} \label{eq:controller_distributed}
	\begin{align}
		\dot{p}_{\reff,i} &= - \tau_i(\nabla f_i(p_\reff) - \lambda) \qquad \forall i \in \{1,\dots,d-1\} \label{eq:marginal_cost_1} \\ 
		\dot{\lambda} &= -\kappa y . \label{eq:price_forming_distributed}
	\end{align}
\end{subequations}
This way, the price-forming mechanism does not need system-wide knowledge; the price is formed solely by the grid-forming \gls{dgu} and forwarded to the grid-following \glspl{dgu} within the microgrid\footnote{Note that it may be forwarded in a distributed manner}. Thus, the \gls{dgu} responsible for stabilizing the grid (grid-forming) is the price-making entity, and the grid-following \glspl{dgu} are the price-taking agents. 

\subsection{Closed-loop stability}

In this subsection, we study the stability of the microgrid~\eqref{eq:microgrid_dynamics} controlled with~\eqref{eq:controller_distributed}. Define the state variable  $x = \col{v, \ir{}, e, i_\pi, p_\reff, \lambda} \in \R^{\numOfstates}$ with $\numOfstates = \numOfnodes+ 3\numOfDGU + \numOflines$ the number of states, and let $\rl{x}$ be an equilibrium point of the closed-loop system for a constant $v_\reff$. In shifted coordinates $\err{x} = x - \rl{x}$, the nonlinear closed-loop system reads
\begin{align} \label{eq:sys_shifted}
	\dot{\err{x}} = A(\err{x},\rl{x},P) \err{x}  
\end{align} 
with 
{\small
\begin{align} 
	&A(\err{x},\rl{x},P) = \\
	&\quad \mat{- \Cf{}^{-1} Y+P[\bar{v}]^{-1}[v]^{-1} & \Cf{}^{-1} I_{\text{f}} & 0 & -\Cf{}^{-1} \inci & 0 & 0 \\ \alpha I_{\text{f}}^\top & \beta & \gamma & 0 & 0 & 0 \\ -I_v - I_p[\ir{}] & -I_p [\bar{v}] & 0 & 0 & {I}_p & 0 \\ \Lpi{}^{-1}\inci & 0 & 0 & -\Lpi{}^{-1} \Rpi{} & 0 & 0 \\ 0 & 0 & 0 & 0 & -Q & -\tau \\ 0 & -\kappa & 0 & 0 & 0 & 0},
\end{align} 
}
where $v = \rl{v} + \err{v}$, $\ir{} = \rl{i}_{\text{f}} + \err{i}_{\text{f}}$, $Q = [\tau_i q_i]$, $\tau = \col{\tau_i}$ and $Y = [y_i]$. The constant power load $P = \col{p_i} \in \R^\numOfnodes$ is seen as a variable here, since we consider it to vary over time. 

\begin{definition}
	The feasible subspace of the state space $\X \subset \R^{\numOfstates}$ for safe operation is defined as
\begin{align}
	\X = \V \times \I \times \R^{2\numOfDGU + \numOflines}, 
\end{align}
where $\V = [v_{\min}, v_{\max}] \subset \R^{\numOfnodes}$ and $\I = [i_{\text{f},\min}, i_{\text{f},\max}] \subset \R^\numOfDGU$ are polytopic sets describing maximum and minimum feasible node voltages and filter currents. 
\end{definition}


In the following, we present a condition to assess stability of the closed-loop system by solving a semi-definite program.

\begin{lemma} \label{th:stability}
	Any equilibrium point $\rl{x}$ with $\rl{v} \in {\V}$ and $\rl{i}_{\text{f}}  \in {\I}$ is asymptotically stable if there exists a symmetric $S \in \R^{\numOfstates \times \numOfstates}$ such that
	\begin{align}
		S > 0 \label{eq:lyapunov_microgrid_1} \\
		A(\err{x},\rl{x},P)^\top S + S A(\err{x},\rl{x},P)  < 0, \label{eq:lyapunov_microgrid_2}
	\end{align}
	hold for all steady-states $\rl{v} \in {\V}$, $\rl{i}_{\text{f}} \in {\I}$ and $P \in \Pl$, and for the subset of the state space $\err{v} \in \err{\V}$, $\err{i}_{\text{f}}  \in \err{\I}$, where $\err{\V}=[\err{v}_{\min}, \err{v}_{\max}]\subset \R^{\numOfnodes}$ and $\err{\I} = [\err{i}_{\text{f},\min}, \err{i}_{\text{f},\max}]\subset \R^{\numOfDGU}$ are sets describing the maximal deviation of node voltages and filter currents, and $\Pl = [p_{\min}, p_{\max}]\subset \R^{\numOfnodes}$ the constant power load bounds.
\end{lemma}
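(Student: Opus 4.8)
The plan is to certify asymptotic stability of the origin $\err{x}=0$ of the shifted dynamics~\eqref{eq:sys_shifted} by Lyapunov's direct method, using the quadratic candidate $V(\err{x}) = \err{x}^\top S \err{x}$. The whole argument hinges on the fact that~\eqref{eq:sys_shifted} is an \emph{exact} rewriting of the closed-loop nonlinearities in quasi-linear form: the constant-power-load term $p_i/v_i$ and the bilinear injected-power term $v_i\ir{i}$ have been absorbed into the state-dependent entries $P[\rl{v}]^{-1}[v]^{-1}$ and $[\ir{}]$ of $A(\err{x},\rl{x},P)$, so that $A(\err{x},\rl{x},P)\err{x}$ reproduces the true vector field. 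In particular $\err{x}=0$ (i.e. $x=\rl{x}$) is an equilibrium, and from~\eqref{eq:lyapunov_microgrid_1} the function $V$ is positive definite with $V(0)=0$.

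First I would differentiate $V$ along trajectories. Since $\dot{\err{x}} = A\err{x}$ with $A=A(\err{x},\rl{x},P)$, I obtain
\begin{equation}
  \dot{V} = \dot{\err{x}}^\top S \err{x} + \err{x}^\top S \dot{\err{x}} = \err{x}^\top \bigl( A^\top S + S A \bigr)\err{x}.
\end{equation}
The crucial point is that, although $A$ depends on $\err{x}$, computing $\dot V$ only requires substituting the vector field $A\err{x}$; the matrix $A$ is merely \emph{frozen} at the current state, so $\dot V$ is genuinely the quadratic form above with a state-dependent matrix and no residual higher-order terms. By hypothesis~\eqref{eq:lyapunov_microgrid_2} the inequality $A^\top S + SA < 0$ holds for every admissible $(\rl{v},\rl{i}_{\text{f}},P)\in\V\times\I\times\Pl$ and every $(\err{v},\err{i}_{\text{f}})\in\err{\V}\times\err{\I}$, hence in particular at the instantaneous values taken along any trajectory that remains in this region. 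Therefore $\dot V < 0$ for all $\err{x}\neq 0$ in the admissible region.

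Next I would make the conclusion regional, since~\eqref{eq:lyapunov_microgrid_2} is only imposed on $\err{\V}\times\err{\I}$ (times all of $\R^{2\numOfDGU+\numOflines}$ in the remaining coordinates). Because $S>0$, the sublevel sets $\Omega_c = \{\err{x}\,\vert\, V(\err{x})\leq c\}$ are compact ellipsoids, and I would pick $c$ small enough that the projection of $\Omega_c$ onto the $(\err{v},\err{i}_{\text{f}})$-coordinates lies inside $\err{\V}\times\err{\I}$. On such an $\Omega_c$ we have $\dot V<0$, so $\Omega_c$ is positively invariant and every trajectory starting in it converges to the origin, giving local asymptotic stability of $\rl{x}$. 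Moreover, since $S$ and hence $V$ are independent of $P$, while~\eqref{eq:lyapunov_microgrid_2} is required for all $P\in\Pl$, the same $V$ is a \emph{common} Lyapunov function for the whole admissible load range, so the estimate $\dot V<0$ is uniform in $P$ and the conclusion is robust to the variation of the constant power load.

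The main obstacle is not the Lyapunov step itself but verifying hypothesis~\eqref{eq:lyapunov_microgrid_2}, which is a continuum of matrix inequalities parameterized by $(\rl{v},\rl{i}_{\text{f}},P,\err{v},\err{i}_{\text{f}})$. The way to render this tractable is to note that $A$ is affine in the lifted parameters $\theta_i := p_i/(\rl{v}_i v_i)$ (the diagonal entries of $P[\rl{v}]^{-1}[v]^{-1}$) and $\ir{i}$, each of which ranges over a bounded interval fixed by $\V$, $\I$, $\Pl$, $\err{\V}$ and $\err{\I}$. Since $\theta\mapsto A(\theta)^\top S + S A(\theta)$ is then affine and the negative-definite cone is convex, the inequality holds on the whole box if and only if it holds at its finitely many vertices, reducing~\eqref{eq:lyapunov_microgrid_2} to a finite semi-definite program in $S$.
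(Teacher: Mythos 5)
Your proposal is correct and follows essentially the same route as the paper: the quadratic Lyapunov function $V(\err{x})=\err{x}^\top S\err{x}$, positive definiteness from~\eqref{eq:lyapunov_microgrid_1}, and $\dot V=\err{x}^\top(A^\top S+SA)\err{x}<0$ from~\eqref{eq:lyapunov_microgrid_2}, concluding via Lyapunov's direct method. Your explicit sublevel-set argument for keeping trajectories inside $\err{\V}\times\err{\I}$ and the vertex-enumeration remark merely spell out details the paper delegates to its citation of Khalil's Theorem~4.1 and to its subsequent remark on the semi-definite program.
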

\begin{proof}
	Denote $\err{X} = \err{\V} \times \err{\I} \times \R^{2\numOfDGU+\numOflines}$. Consider the Lyapunov function $V: \err{X} \rightarrow \R$ with $V(\err{x}) = \err{x}^\top S \err{x}$. With~\eqref{eq:lyapunov_microgrid_1}, the positive definiteness $V>0$ in $\err{X}-\{0\}$ is ensured. The time derivative of the Lyapunov function is $\dot{V} = \err{x}^\top\big( A(\cdot)^\top S + S A(\cdot) \big) \err{x}$. The linear matrix inequality~\eqref{eq:lyapunov_microgrid_2} ensures that the time derivative of the Lyapunov function is negative in $\err{X} - \{0\}$ for any possible steady-state $\rl{v} \in {\V}$, $\rl{i}_{\text{f}} \in {\I}$. Thus, there exists a Lyapunov function fulfilling the conditions in \cite[Theorem~4.1]{khalil2002nonlinear} for all possible steady states, which implies asymptotically stability for any steady state. 
\end{proof}
\begin{remark}
	Computing a matrix $S$ that fulfills \eqref{eq:lyapunov_microgrid_1} and \eqref{eq:lyapunov_microgrid_2} $\forall \err{x} \in \err{X}$, $\forall \rl{x} \in {X}$ and $\forall P \in \Pl$ is a semi-definite program if the sets $\err{X}, \rl{X}$ and $\Pl$ are convex polytopes. Then, matrix $S$ has to satisfy \eqref{eq:lyapunov_microgrid_1} and \eqref{eq:lyapunov_microgrid_2} for all the vertices of the convex hull of $\err{X}, {X}$ \cite[Ch.~5.1]{boyd1994linear}. 
\end{remark}
%
%

Proposition~\ref{th:stability} above gives a condition which is easy to check via semi-definite programming for the closed-loop stability of any feasible equilibrium of~\eqref{eq:sys_shifted}. 

In the next section, the interconnection ports for considering networked microgrids are defined, and the dissipativity properties are studied. 

\section{Interconnection of microgrids with price-based controllers} \label{sec:ports}

The following definitions lay the foundation for analysing the stability and optimality of a set of networked microgrids~\eqref{eq:microgrid_dynamics}, each controlled with~\eqref{eq:controller_distributed}. First, we define novel physical and economic interconnection ports and study their dissipativity properties. Then, we propose an interconnection scheme for the microgrids $k \in \M$ such that the networked microgrids are asymptotically stable and operate at a globally economically optimal steady state. 

Since an interconnection of multiple microgrids is considered, the microgrid index $k \in \M$ is not further omitted.

\subsection{Microgrid ports: Definition and dissipativity}

The following electric port defines an interface for interconnecting microgrids via electric lines.
\begin{definition}[electric ports] \label{def:elec}
	Let $i_{\ext,i}^k$ be an external current injected at a node $i \in \setOfnodes$ and $v_i^k$ the voltage at that node for microgrid $k \in \M$. The input-output pair $(i_{\ext,i}^k, v_i^k)$ is called an electric port\footnote{Note that electric ports have been used in the literature for interconnecting \glspl{dgu} and lines~\cite{nahata2020passivity,strehle2020} \emph{within} a microgrid. Definition~\ref{def:elec} can hence be understood as leveraging these ports \emph{between} microgrids.} for that microgrid. 
\end{definition}

The electric port is interfaced with system~\eqref{eq:sys_shifted} through the vectors $b_\ext^k = \col{t_i, 0_{3\numOfDGU+\numOflines}}$ and $c_\ext^k = b_\ext^{k\top}$, where $t_i \in \R^\numOfnodes$ has a 1 at the $i$-th element and zero elsewhere, since an external current drawn to a node $i \in \setOfnodes$ acts on the voltage dynamics~\eqref{eq:voltage_dynamics_dgu} or~\eqref{eq:voltage_dynamics_load} of node $i \in \setOfnodes$. Note that a microgrid may contain an arbitrary number $z \in \N$ of electric ports, yielding matrices $B_\ext^k = [b_{\ext,1}^k,\dots,b_{\ext,z}^k]$ and $C_\ext^k = B_\ext^{k\top}$.

The following economic port defines an interface for interconnecting microgrids economically. 
\begin{definition}[economic ports]
	Let $\lambda_{\glob}^k \in \R_{}$ denote an external electric power price and $\lambda_{\loc}^k \in \R_{}$ the local price for a certain microgrid. The input-output pair $(\lambda_{\glob}^k, \lambda_{\loc}^k)$ is called the economic port for microgrid $k \in \M$. 
\end{definition}

When the economic port $(\lambda_{\glob}^k, \lambda_{\loc}^k)$ is connected, we replace the price used for the grid-following \glspl{dgu} in~\eqref{eq:marginal_cost_1} with the input $\lambda_\glob^k$, yielding
\begin{subequations} \label{eq:ports_econ}
	\begin{align}
		\dot{p}_{\reff,i}^k &= - \tau_i^k(\nabla f_i^k(p_\reff^k) - \lambda_\glob^k) \label{eq:pref_externalPrice} \\ 
		\dot{\lambda}_\loc^k &= -\kappa^k \ir{1}^k . \label{eq:local_price_forming}
	\end{align}
\end{subequations}
The local price $\lambda_\loc^k$ (output of economic port) is still determined by the current of the grid-forming \gls{dgu}, but is no longer used directly in the local microgrid. Splitting the price in a microgrid into local and external prices allows, using a special interconnection structure for economic ports as proposed in Section~\ref{sec:interconnection_econ}, the local price $\lambda_\loc^k$ to contribute towards a (global) external price. The external price then already implicitly contains a coordination between microgrids, and is used by the grid-following \glspl{dgu} in order to achieve global optimal dispatch. Note that only a single economic port per microgrid is allowed in this work, since we have a single local price per microgrid.

The economic port thus interfaces with the system~\eqref{eq:sys_shifted} through the vectors $b_\econ^k = \col{0_{\numOfnodes+2\numOfDGU+\numOflines}, \tau, 0}$ and $c_\econ^k = \col{0_{\numOfnodes+3\numOfDGU+\numOflines-1}, 1}$.
System~\eqref{eq:sys_shifted} with electric and economic ports reads then
\begin{subequations} \label{eq:sys_with_ports}
	\begin{align} 
		\dot{\err{x}}^k & = A_\lambda^k(\err{x}^k,\rl{x}^k,P^k) \err{x}^k + B_\ext^k i_{\ext}^k + b_\econ^k \lambda_\glob^k \\
		\err{y}_\ext^k &= C_\ext^k \err{x}^k = v_\ext^k \\
		\err{y}_\econ^k &= c_\econ^k \err{x}^k = \lambda_\loc^k.
	\end{align} 
\end{subequations}
The matrix $A_\lambda^k(\cdot)$ is the same as $A(\cdot)$ in \eqref{eq:sys_shifted} except that~\eqref{eq:pref_externalPrice} replaces~\eqref{eq:marginal_cost_1}, where the input $\lambda_\glob^k$ is used instead of the state from~\eqref{eq:local_price_forming}. Vector $i_\ext^k = \col{i_{\ext,z}^k}$ is the input for all electric ports $z$. 

We are now interested in the dissipativity properties of the interconnection ports, in order to analyse microgrids interconnected via physical-electric or information-economic ports. First, we analyse the dissipativity properties of the electric port; thereafter, the properties of the economic port. 
\begin{lemma} \label{lem:electric_ports}
	Let a microgrid self-close its economic port with $\lambda_{\glob}^k = y_{\econ}^k = \lambda_\loc^k$, i.e. without interconnecting economically with other microgrids. System~\eqref{eq:sys_with_ports} is then equilibrium independent passive \cite{hines2011equilibrium} w.r.t. the electric port $(v_\ext^k, i_{\ext}^k)$ if a there exists a symmetric $S \in \R^{\numOfstates\times \numOfstates}$ solving
	\begin{subequations}
		\begin{align}
			S & > 0 \\ 
			\mat{A_\lambda^k(\cdot)^\top S + S A_\lambda^k(\cdot) & S B_\ext^k - C_\ext^{k\top} \\ B_\ext^{k\top} S - C_\ext^k & 0 } & \leq 0 \label{eq:passivity_electric_port}
		\end{align} 
	\end{subequations}
	for all $\rl{x}^k \in {\X}$, $\err{x}^k \in \err{\X}$ and $P^k \in \Pl$, with $\X$, $\err{\X}$ and $\Pl$ defined as in Proposition~\ref{th:stability}.
\end{lemma}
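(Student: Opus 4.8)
The plan is to read \eqref{eq:passivity_electric_port} as the (equilibrium-independent) Kalman--Yakubovich--Popov inequality for the electric port and to exhibit the single quadratic storage function it certifies. First I would carry out the self-closing explicitly: substituting $\lambda_\glob^k = \err{y}_\econ^k = c_\econ^k \err{x}^k$ into \eqref{eq:sys_with_ports} turns the economic input term into $b_\econ^k \lambda_\glob^k = b_\econ^k c_\econ^k \err{x}^k$, which is absorbed into the state matrix. The electric port then drives the dynamics $\dot{\err{x}}^k = A_\lambda^k(\cdot)\err{x}^k + B_\ext^k \err{i}_\ext^k$, $\err{y}_\ext^k = C_\ext^k \err{x}^k$, where $\err{i}_\ext^k$ is the deviation of the port current from its equilibrium value and $A_\lambda^k(\cdot)$ now denotes the self-closed state matrix entering \eqref{eq:passivity_electric_port}. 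Since the electric port has no direct feedthrough, the relevant (shifted) supply rate is $\err{y}_\ext^{k\top}\err{i}_\ext^k = v_\ext^{k\top}\err{i}_\ext^k$, i.e. the deviation of the injected electrical power.

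Second, I would take the storage function $V(\err{x}^k) = \tfrac{1}{2}\err{x}^{k\top} S \err{x}^k$. By $S > 0$ it is positive definite and vanishes only at $\err{x}^k = 0$, and --- being a function of the deviation alone --- it is defined about \emph{every} equilibrium $\rl{x}^k$ through the same $S$. Differentiating along the self-closed dynamics and subtracting the supply rate, collecting the cross term $\err{x}^{k\top} S B_\ext^k \err{i}_\ext^k$ and $\err{y}_\ext^{k\top}\err{i}_\ext^k = \err{x}^{k\top} C_\ext^{k\top}\err{i}_\ext^k$ into a single symmetric quadratic form gives
\begin{equation*}
	\dot{V} - \err{y}_\ext^{k\top}\err{i}_\ext^k = \tfrac{1}{2}\mat{\err{x}^k \\ \err{i}_\ext^k}^\top \mat{A_\lambda^{k\top} S + S A_\lambda^k & S B_\ext^k - C_\ext^{k\top} \\ B_\ext^{k\top} S - C_\ext^k & 0}\mat{\err{x}^k \\ \err{i}_\ext^k}.
\end{equation*}
The block matrix on the right is exactly the one in \eqref{eq:passivity_electric_port}, so its negative semi-definiteness immediately yields the dissipation inequality $\dot{V} \leq \err{y}_\ext^{k\top}\err{i}_\ext^k$.

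Third, I would conclude equilibrium-independent passivity in the sense of \cite{hines2011equilibrium}: because \eqref{eq:passivity_electric_port} is required to hold for \emph{every} admissible equilibrium $\rl{x}^k \in \X$ --- and, to account for the nonlinear, state-dependent entries of $A_\lambda^k(\err{x}^k,\rl{x}^k,P^k)$, for all $\err{x}^k \in \err{\X}$ and $P^k \in \Pl$ --- the single quadratic storage $V$ certifies $\dot{V} \leq \err{y}_\ext^{k\top}\err{i}_\ext^k$ about each equilibrium simultaneously. This is precisely EIP of \eqref{eq:sys_with_ports} with respect to the electric port $(v_\ext^k, i_\ext^k)$.

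I expect the main obstacle to be conceptual rather than computational, on two fronts. The first is bookkeeping the self-closing correctly, so that the economic feedback $b_\econ^k c_\econ^k$ is folded into the matrix appearing in \eqref{eq:passivity_electric_port} and the resulting form matches the KYP inequality with zero feedthrough (including the normalisation that fixes the factor $\tfrac{1}{2}$ in the storage). The second, and more delicate, is the equilibrium-independent aspect: establishing that one fixed $S$, valid uniformly over the entire polytope $\X \times \err{\X} \times \Pl$, simultaneously furnishes a valid storage function about every equilibrium. The quantification over this region makes \eqref{eq:passivity_electric_port} a robust (parameter-dependent) LMI; as in the Remark following Proposition~\ref{th:stability}, its verification reduces to the vertices of the convex hull, but that reduction is a computational matter separate from establishing EIP itself.
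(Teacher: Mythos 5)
Your proposal is correct and follows essentially the same route as the paper: the paper's (very terse) proof likewise takes the quadratic storage $V(\err{x}^k)=\err{x}^{k\top}S\err{x}^k$, imposes the shifted supply rate $\err{v}_\ext^{k\top}\err{i}_\ext^k$ on the self-closed system, and reads off \eqref{eq:passivity_electric_port}. Your additional bookkeeping --- absorbing $b_\econ^k c_\econ^k$ into the state matrix and noting that the quantification over $\X$, $\err{\X}$, $\Pl$ is what delivers the equilibrium-independent claim --- is a faithful elaboration of that argument rather than a different one.
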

\begin{proof}
	Consider the storage function $V(\err{x}^k) = \err{x}^{k\top} S \err{x}^k$ with $S>0$. The passivity condition $\dot{V} < \err{v}_\ext^{k\top} \, \err{i}_{\ext}^k$ leads to~\eqref{eq:passivity_electric_port} for the system with self-closed economic port, i.e. $\lambda_\glob^k = y_\econ^k$. 
\end{proof}

Proposition~\ref{lem:electric_ports} ensures stability of a scenario where different microgrids are interconnected via electric ports. Since an economic port is not considered, the electric power prices in the microgrids are independent, yielding optimal operation in each microgrid but suboptimal operation of the networked microgrids as is shown in Section~\ref{sec:simulation}. 

In order to interconnect the microgrids via the economic port and achieve an economic cooperation, we study the dissipativity properties of both port types simultaneously. 
\begin{theorem} \label{th:econ}
	System~\eqref{eq:sys_with_ports} is input-feedforward and output-feedback equilibrium independent passive w.r.t. the electric $(v_\ext^k, i_{\ext}^k)$ and economic ports $(\lambda_{\glob}^k, \lambda_{\loc}^k)$ if there exists a symmetric $S \in \R^{\numOfstates\times \numOfstates}$, and indices $\nu^k \in \R$, $\rho^k \in \R$ such that 
	{\small%
	\begin{subequations}
		\begin{align} 
			S & > 0 \\ 
			\mat{ X(\err{x}^k,\rl{x}^k,\rho^k) & S B_\ext^k - C_\ext^{k\top} &  S b_\econ^k - c_\econ^{k\top} \\ B_\ext^{k\top}  S - C_\ext^k & 0 & 0 \\ b_\econ^{k\top} S - c_\econ^k & 0 & \nu^k } & \leq 0 \label{eq:passivity_economic_port}
		\end{align} 
	\end{subequations}
	}
	holds for all $\rl{x}^k \in \rl{\X}$, $\err{x}^k \in \err{\X}$ and $P^k \in \Pl$, where {$X(\err{x}^k,\rl{x}^k,\rho^k) = A_\lambda^k (\cdot)^\top S + S A_\lambda^k(\cdot) + \rho^k c_\econ^{k\top} c_\econ^k$}.
\end{theorem}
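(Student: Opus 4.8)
The plan is to certify the three dissipativity properties at once with a single quadratic storage function and to read off the block inequality~\eqref{eq:passivity_economic_port} as the matrix form of the associated dissipation inequality, exactly as in the proof of Proposition~\ref{lem:electric_ports}. I would take $V(\err{x}^k) = \err{x}^{k\top} S \err{x}^k$ with $S>0$. Because the dynamics~\eqref{eq:sys_with_ports} are written in coordinates shifted about an \emph{arbitrary} equilibrium $\rl{x}^k$, this $V$ is in fact the equilibrium-parametrised family of storage functions demanded by the equilibrium-independent notion of~\cite{hines2011equilibrium}: one fixed $S$ simultaneously certifies every admissible steady state. The target supply rate encodes that the electric port is plain passive while the economic port is both input-feedforward and output-feedback passive, namely $\dot V \le v_\ext^{k\top} i_\ext^k + \lambda_\loc^k \lambda_\glob^k - \rho^k (\lambda_\loc^k)^2 - \nu^k (\lambda_\glob^k)^2$.

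The key step is the matching. Differentiating $V$ along~\eqref{eq:sys_with_ports}, where the two inputs enter affinely through $B_\ext^k$ and $b_\econ^k$, gives
\[
\dot V = \err{x}^{k\top}\big(A_\lambda^{k\top} S + S A_\lambda^k\big)\err{x}^k + 2\err{x}^{k\top} S B_\ext^k i_\ext^k + 2\err{x}^{k\top} S b_\econ^k \lambda_\glob^k .
\]
I would then substitute $v_\ext^k = C_\ext^k\err{x}^k$ and $\lambda_\loc^k = c_\econ^k\err{x}^k$, move the supply rate to the left-hand side, and collect the result as a single quadratic form $\zeta^{\top} M \zeta$ in the stacked vector $\zeta = \col{\err{x}^k, i_\ext^k, \lambda_\glob^k}$. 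The claim is that $M$ is precisely the matrix in~\eqref{eq:passivity_economic_port}: the output-feedback term $-\rho^k(\lambda_\loc^k)^2 = -\rho^k \err{x}^{k\top} c_\econ^{k\top} c_\econ^k \err{x}^k$ folds into the leading block, turning it into $X(\cdot)$; the input-feedforward term $-\nu^k(\lambda_\glob^k)^2$ populates the scalar $(3,3)$ entry $\nu^k$; the electric coupling produces the off-diagonal block $S B_\ext^k - C_\ext^{k\top}$ with vanishing $(2,2)$ block (no feedthrough, plain passivity); and the economic coupling produces $S b_\econ^k - c_\econ^{k\top}$, while the absence of an $i_\ext^k\lambda_\glob^k$ cross term gives the zero $(2,3)$ block. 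Hence $M \le 0$ is equivalent to the dissipation inequality. This generalises Proposition~\ref{lem:electric_ports}, where the economic port was self-closed so that neither $\rho^k c_\econ^{k\top} c_\econ^k$ nor $\nu^k$ appeared; here both ports are left open.

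Finally, requiring $M\le 0$ for all $\rl{x}^k\in\rl{\X}$, $\err{x}^k\in\err{\X}$ and $P^k\in\Pl$ with a common $S,\nu^k,\rho^k$ delivers the dissipation inequality uniformly over the feasible set and over every equilibrium, which is exactly the stated input-feedforward and output-feedback equilibrium-independent passivity. The algebra itself mirrors the earlier proofs and is routine; I expect the genuine difficulty to lie in two places. First, in the bookkeeping of the supply rate: one must keep one port plain passive while making the other simultaneously input-feedforward and output-feedback passive, and verify that the feedback term lands in the $(1,1)$ block (because $\lambda_\loc^k$ is a linear function of the state) whereas the feedforward term lands in the $(3,3)$ block (because $\lambda_\glob^k$ is exogenous). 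Second, and more substantively, $A_\lambda^k(\err{x}^k,\rl{x}^k,P^k)$ depends on the state, equilibrium and load, and the entry $P[\rl{v}]^{-1}[v]^{-1}$ is not affine in these parameters; reducing the parametrised family~\eqref{eq:passivity_economic_port} to a finite semi-definite program therefore requires convexifying or bounding this term over the polytopes $\err{\X}$, $\rl{\X}$, $\Pl$ and checking the vertices, as in the remark following Proposition~\ref{th:stability}.
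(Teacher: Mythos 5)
Your proposal follows essentially the same route as the paper: the same quadratic storage function $V(\err{x}^k)=\err{x}^{k\top}S\err{x}^k$, the same supply rate $\err{v}_\ext^{k\top}\err{i}_{\ext}^k+\err{\lambda}_{\loc}^k\err{\lambda}_{\glob}^k-\nu^k(\err{\lambda}_\glob^k)^2-\rho^k(\err{\lambda}_\loc^k)^2$, and the same reduction of the dissipation inequality to the block matrix inequality~\eqref{eq:passivity_economic_port}. You simply spell out the quadratic-form bookkeeping that the paper's two-sentence proof leaves implicit, and your closing observations (why $\rho^k c_\econ^{k\top}c_\econ^k$ lands in the $(1,1)$ block, and the non-affine parameter dependence of $A_\lambda^k$) are correct and consistent with the paper's remarks.
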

\begin{proof}
	Consider the storage function $V(\err{x}^k) = \err{x}^{k\top} S \err{x}^k$ with $S>0$. The passivity condition $\dot{V} < \err{v}_\ext^{k\top} \, \err{i}_{\ext}^k + \err{\lambda}_{\loc}^k \err{\lambda}_{\glob}^k - \nu^k (\err{\lambda}_\glob^k)^2 - \rho^k (\err{\lambda}_\loc^k)^2$ leads to~\eqref{eq:passivity_economic_port} when using~\eqref{eq:sys_with_ports} in the derivative of the storage function. The indices $\nu^k$ and $\rho^k$ indicate the excess or lack of passivity \cite{sepulchre2012constructive}.  
\end{proof}
\begin{remark}
	Depending on the loads and DGUs of a microgrid, the indices $\nu^k$ and $\rho^k$ may be negative indicating a lack of passivity of the economic port, which needs to be compensated as proposed in Section~\ref*{sec:interconnection_econ}.
\end{remark}

In this subsection, the interconnection ports have been defined and their passivity properties studied. Next, an interconnection scheme for the electric and economic ports of networked microgrids ensuring global economic optimality and asymptotic stability is proposed.

\subsection{Networked microgrid operation} \label{sec:interconnection_econ}

The interconnection of microgrids with electric ports through electric lines is analogous to the interconnection of nodes via electric lines as explained in Section~\ref{sec:system}. Given a connected interconnection topology, the interconnection of nodes via electric lines constitutes a skew-symmetric interconnection as shown in \cite{nahata2020passivity}. Thus, microgrids with electric ports fulfilling Proposition~\ref{lem:electric_ports} or Theorem~\ref{th:econ} interconnected via electric lines are guaranteed to be asymptotically stable.   

We now consider a set $\M$ of microgrids interconnected electrically and economically, using the port properties established in Theorem~\ref{th:econ}. To this end, we first characterize a globally optimal steady state for economically interconnected microgrids. 
\begin{lemma} \label{lem:optimality}
	If the external electric power price $\lambda_{\glob}^k \in \R_{}$ is equal for all microgrids, i.e.  
	\begin{align} \label{eq:optimal_dispatch_condition}
		\lambda_{\glob}^k = \rl{\lambda}, \qquad \forall k \in \M
	\end{align}
	with constant $\rl{\lambda} \in \R_{>0}$, we have global optimal dispatch and all \glspl{dgu} inject power at marginal cost at any steady state.
\end{lemma}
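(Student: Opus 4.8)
The plan is to show that the hypothesis~\eqref{eq:optimal_dispatch_condition} forces every steady state of the economically interconnected microgrids to satisfy the Karush--Kuhn--Tucker (KKT) conditions of the \emph{global} economic dispatch problem, which is convex and is therefore characterised exactly by those conditions. First I would state this global problem: minimise the aggregated cost $\sum_{k\in\M}\sum_{i=1}^{\numOfDGU^k-1} f_i^k(p_{\reff,i}^k)$ subject to the single global power-balance constraint $\sum_{k\in\M}\sum_{i=1}^{\numOfDGU^k-1} p_{\reff,i}^k = \Pload^{\text{tot}}$, where $\Pload^{\text{tot}}$ collects all loads and all transmission losses of the network. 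Since each $f_i^k$ is convex quadratic and the constraint is affine, its KKT system is both necessary and sufficient for global optimality, and its stationarity condition requires all marginal costs $\nabla f_i^k(p_{\reff,i}^k)$ to coincide with a single common multiplier.

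The stationarity part follows directly from the port dynamics~\eqref{eq:ports_econ}. Setting $\dot{p}_{\reff,i}^k = 0$ in~\eqref{eq:pref_externalPrice} yields $\nabla f_i^k(p_{\reff,i}^k) = \lambda_\glob^k$ for every grid-following DGU of every microgrid. Invoking the hypothesis $\lambda_\glob^k = \rl{\lambda}$ for all $k\in\M$ then shows that all these marginal costs equal the single value $\rl{\lambda}$, which is exactly the global stationarity condition with $\rl{\lambda}$ acting as the shared multiplier (up to the sign convention of the Lagrangian). This is the sense in which all DGUs inject at marginal cost.

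The feasibility part, which I expect to be the main obstacle, is to recover the global power-balance constraint and to ensure that the costed grid-following units alone carry the entire dispatch. Here I would use the local price-forming law~\eqref{eq:local_price_forming}: imposing $\dot{\lambda}_\loc^k = 0$ forces $\ir{1}^k = 0$, so the grid-forming DGU of each microgrid injects no power at steady state and contributes nothing outside the objective. Adapting the multiply-by-$v^{\top}$ energy-balance argument of Proposition~\ref{lem:power_gfr} to each microgrid, now retaining the electric-port injection term, ties its grid-following infeed to the realised local demand, losses and net port exchange; summing over $k\in\M$, the port-exchange terms account for the inter-microgrid line losses and collapse the per-grid relations into the single global power balance. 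The delicate point is that the losses and the power exchanged through the electric ports are themselves determined by the physical steady state, so the argument must treat $\Pload^{\text{tot}}$ as the self-consistent realised total rather than an a priori constant; the decoupling of local and external prices in the economic port is precisely what keeps the grid-forming condition $\ir{1}^k = 0$ valid microgrid-by-microgrid.

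Combining the two parts, the steady state simultaneously meets the stationarity and feasibility conditions of the global dispatch problem. By convexity these KKT conditions are sufficient, so the steady state is its global minimiser, which establishes global optimal dispatch with every grid-following DGU producing at the common marginal cost $\rl{\lambda}$.
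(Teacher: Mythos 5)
Your proposal is correct and follows essentially the same route as the paper: steady state of~\eqref{eq:pref_externalPrice} gives $\nabla f_i^k(p_{\reff,i}^k)=\lambda_\glob^k=\rl{\lambda}$ (common marginal cost), steady state of~\eqref{eq:local_price_forming} gives $\ir{1}^k=0$ so the grid-following units carry the whole load, and equal prices across microgrids then yield the global-dispatch optimality conditions. The paper's proof is far terser and leaves the KKT framing and the summed power-balance (Proposition~\ref{lem:power_gfr} with port terms) implicit, so your added detail only makes explicit what the paper asserts.
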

\begin{proof}
	From \eqref{eq:pref_externalPrice} it follows that all \glspl{dgu} in a specific microgrid operate at marginal cost if $\lambda_\glob^k$ is constant. Furthermore, at steady state, from~\eqref{eq:local_price_forming}, the grid-forming \gls{dgu} does not inject power and all the loads are supplied optimally by the grid-following \glspl{dgu}. In addition, if~\eqref{eq:optimal_dispatch_condition} holds, we have the same price (and hence equal marginal costs) at every microgrid yielding global optimal dispatch among all grid-following \glspl{dgu}. 
\end{proof}

To achieve $\lambda_{\glob}^k = \rl{\lambda}$ $\forall k$ as in Proposition~\ref{lem:optimality}, we propose a consensus-based algorithm with which the microgrids perform a distributed dynamic averaging of the local price $\lambda_\loc^k$ (output of the economic port). The output of the distributed dynamic averaging is used as the external price $\lambda_\glob^k$ (input of the economic port). Then, at steady state, the external prices $\lambda_\glob^k$ of all $\numOfMG$ microgrids taking part in the distributed dynamic averaging are equal, i.e. $\lambda_\glob^k = \frac{1}{\numOfMG}\sum_{k=1}^{\numOfMG} \lambda_{\loc}^k$, and Proposition~\ref{lem:optimality} is fulfilled. There exist many dynamic consensus algorithms, see \cite{kia2019tutorial} for a survey. In this work, a dynamic consensus algorithm that enjoys an excess of passivity is needed to compensate the lack of passivity of the economic port (characterized in Theorem~\ref{th:econ} via $\nu^k$ and $\rho^k$). We therefore use the proportional dynamic consensus~\cite{kia2019tutorial}
\begin{subequations} \label{eq:PDDA}
	\begin{align}
		\dot{w} &= -(\mu I_{\numOfMG} + L) w - L \lambda_\loc \\
		\lambda_\glob & = w + \lambda_\loc,
	\end{align}
\end{subequations}
where $L\in \R^{\numOfMG\times \numOfMG}$ is the Laplacian matrix of an arbitrary but connected topology describing the communication between the microgrids via economic ports, $\mu \in \R_{>0}$ a tuning parameter and $w \in \R^{\numOfMG}$ auxiliary states. Note that the input $\lambda_\loc = \col{\lambda_\loc^k}$ and output $\lambda_\glob = \col{\lambda_\glob^k}$ of consensus algorithm~\eqref{eq:PDDA} correspond to the economic port as described. All local prices thus contribute to the global, external price.
This consensus protocol is chosen because it exhibits an excess of input and output passivity (it is input-to-state stable~\cite[Theorem~3]{freeman2006stability} and has feedthrough, both related to an excess of passivity~\cite{van2000l2}). If \eqref{eq:PDDA} is designed such that its excess of passivity is greater than the lack of passivity of the economic ports obtained in Theorem~\ref{th:econ}, the feedback interconnection is asymptotically stable \cite[Theorem~6.2]{khalil2002nonlinear}. 
Note that a formal proof using the methodology described above is omitted due to space constraints.

\section{Simulation results} \label{sec:simulation}

In this section, the proposed methods are illustrated on different scenarios with networked microgrids. First, we study the operation of networked microgrids interconnected solely through the electric ports. Thereafter, we show that global optimality is obtained for the networked microgrids with an interconnection as proposed in Section~\ref{sec:interconnection_econ} for the economic ports. 

\subsection{Proposed scenario} \label{sec:simulation_scenario}

We consider a meshed DC microgrid with 8 nodes and 9 lines interconnected via electric lines (purple) to a second meshed DC microgrid with 4 nodes and 4 lines shown in Figure~\ref{fig:microgird_interconnected}, which was adapted from the scenario in \cite{zhao2015distributed}. In Microgrid 1, Node 4 (turquoise) is equipped with a grid-forming \gls{dgu}, which stabilizes the grid dynamics and acts here as price-forming entity. Nodes 2, 3 and 4 (red) have a grid-following \gls{dgu}, which are the price-taking feeders. The cost of the power injection is set to $f_1^1(p_1) = 1.2p_1^2$, $f_2^1(p_2) = 1.3p_2^2$ and $f_3^1(p_3) = 1.4p_3^2$. All other nodes (black) consist only of nonlinear loads~\eqref{eq:load}. In Microgrid 2, the grid-forming \gls{dgu} is located at Node 3. Grid-following \glspl{dgu} are located at Node 1 and 4, while Nodes 2 consists only of a nonlinear load. The cost of the power injection is set for the \glspl{dgu} in microgrid 2 to $f_1^2(p_1) = 1.4p_1^2$ and $f_4^2(p_4) = 1.5p_4^2$. 

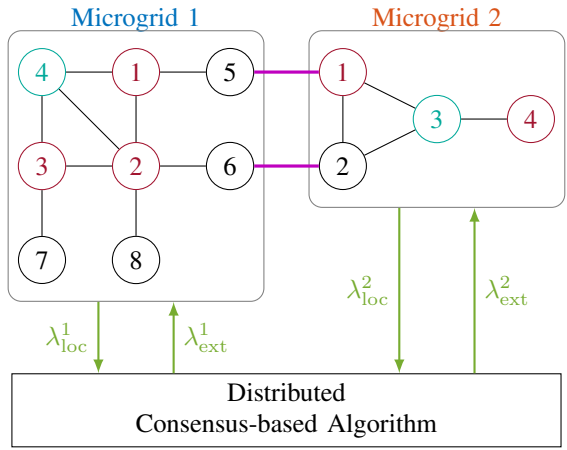
\begin{figure}[t]
    \centering
    \begin{tikzpicture}
        \node[draw, circle, gfr] (n4) at (0,0) {\textcolor{gfr}{4}};
        \node[draw, circle, gfl] (n3) at (0,-1.25) {\textcolor{gfl}{3}};
        \node[draw, circle, gfl] (n2) at (1.25,-1.25) {\textcolor{gfl}{2}};
        \node[draw, circle, gfl] (n1) at (1.25,0) {\textcolor{gfl}{1}};
        \node[draw, circle] (n7) at (0,-2.5) {7};
        \node[draw, circle] (n8) at (1.25,-2.5) {8};
        \node[draw, circle] (n5) at (2.5,0) {5};
        \node[draw, circle] (n6) at (2.5,-1.25) {6};

		\node[draw, circle, gfl] (n21) at (4,0) {\textcolor{gfl}{1}};
        \node[draw, circle] (n22) at (4,-1.25) {2};
		\node[draw, circle, gfr] (n23) at (5.25,-0.625) {\textcolor{gfr}{3}};
        \node[draw, circle, gfl] (n24) at (6.5,-0.625) {\textcolor{gfl}{4}};

        \draw[] (n1) -- (n4);
        \draw[] (n1) -- (n2);
        \draw[] (n1) -- (n5);
        \draw[] (n2) -- (n4);
        \draw[] (n2) -- (n6);
        \draw[] (n2) -- (n8);
        \draw[] (n3) -- (n7);
        \draw[] (n3) -- (n2);
		\draw[] (n3) -- (n4);

		\draw[] (n21) -- (n22);
        \draw[] (n21) -- (n23);
        \draw[] (n22) -- (n23);
        \draw[] (n23) -- (n24);

		\node[m1] (mg1) at (1.25,0.7) {Microgrid 1};
		\node[m2] at (5.25,0.7) {Microgrid 2};
		\node[draw, minimum width=3.4cm, minimum height = 3.6cm, gray, rounded corners=5] at (1.25,-1.25) {};
		\node[draw, minimum width=3.4cm, minimum height = 2.35cm, gray, rounded corners=5] at (5.25,-0.625) {};

		\draw[lines, very thick] (n21) -- (n5);
        \draw[lines, very thick] (n22) -- (n6);
        
        \node[draw, align=center, minimum width=7.3cm] (dda) at (3.25,-4.5) {Distributed \\ Consensus-based Algorithm};
	 \draw[latex-,econ, thick]  ($(dda.north)+(-1.5,0.95)$) -- node[right] {$\lambda^1_{\rm ext}$} ($(dda.north)+(-1.5,0)$) ;
	 \draw[-latex, econ, thick]  ($(dda.north)+(-2.5,0.95)$) -- node[left] {$\lambda^1_{\rm loc}$} ($(dda.north)+(-2.5,0)$) ;
	 
	 \draw[-latex, econ, thick]  ($(dda.north)+(+1.5,2.2)$) -- node[left] {$\lambda^2_{\rm loc}$} ($(dda.north)+(+1.5,0)$) ;
	 \draw[latex-, econ, thick]  ($(dda.north)+(+2.5,2.2)$) -- node[right] {$\lambda^2_{\rm ext}$} ($(dda.north)+(+2.5,0)$) ;

    \end{tikzpicture}
    \caption{Microgrids with grid-forming DGUs (turquoise), grid-following DGUs (red), and nonlinear load (black) nodes, interconnected through the electric ports with electric lines (purple) and economic ports (green)}
	\label{fig:microgird_interconnected}
\end{figure}

Load steps occur at time $t_1 = \SI{20}{\second}$ and $t_2 = \SI{40}{\second}$. The total load of the microgrids after the load steps is shown in Table~\ref{tab:load_inter}. At $t_1 = \SI{20}{\second}$, we have only an increase of the load in Microgrid 1 and hence an increase in the sum of the loads in both microgrids. At $t_2 = \SI{40}{\second}$, the load in Microgrid 1 decreases but increases in Microgrid 2, such that the sum of the loads decreases. 
\begin{table}[t]
	\centering
	\caption{Total microgrid load}
	\label{tab:load_inter}
	\begin{tabular}{c | c | c | c}
		& $\SI{0}{\second} - \SI{20}{\second}$ & $\SI{20}{\second} - \SI{40}{\second}$ & $\SI{40}{\second} - \SI{60}{\second}$ \\ \midrule
		Microgrid 1 & \SI{16}{\kilo \watt} & \SI{20.4}{\kilo \watt} & \SI{9.5}{\kilo \watt} \\
		Microgrid 2 & \SI{8}{\kilo \watt} & \SI{8}{\kilo \watt} & \SI{11}{\kilo \watt} \\ \hline
		Sum & \SI{24}{\kilo \watt} & \SI{28.4}{\kilo \watt} & \SI{20.5}{\kilo \watt}
	\end{tabular}
\end{table}
The reference voltage $v_\reff$ is set to \SI{1000}{\volt}. Typical parameter values for the lines and \gls{dgu} parameters are taken from~\cite{tucci2016}.

In the next section, two microgrids interconnected through electric ports are considered in order to highlight the plug-and-play stability of microgrids stated in Proposition~\ref{lem:electric_ports}. 

\subsection{Networked microgrids through electric ports} \label{sec:simulation_elec}

In this section, the microgrids interconnected via the electric lines (purple) as shown in Figure~\ref{fig:microgird_interconnected} are considered. For both microgrids, Proposition~\ref{lem:electric_ports} is verified for the subset of the state space defined via $v_{\min} = \SI{950}{\volt}$, $v_{\max} = \SI{1050}{\volt}$ for all nodes and $i_{\text{f},\min} = \SI{-15}{\ampere}$, $i_{\text{f},\max} = \SI{15}{\ampere}$ for all filter currents.

The power injection of the grid-following \glspl{dgu} and the price in both microgrids are shown in Figures~\ref{fig:elec_power}~and~\ref{fig:elec_preis}, respectively. When the load step at $t_1 = \SI{20}{\second}$ occurs, the \glspl{dgu} in Microgrid 1 increase their power injection, while the \glspl{dgu} in Microgrid 2 decrease their power injection, even if the sum of the loads increased (see Table~\ref{tab:load_inter}). At $t_2 = \SI{40}{\second}$, the power injection in Microgrid 1 decreases and increases in Microgrid 2, following the trend of their own load (see Table~\ref{tab:load_inter}). 

Similar dynamics are also shown by the price in Figure~\ref{fig:elec_preis}. In particular, the price dynamics in both microgrids depend on the local microgrid load changes rather than on the sum of the loads across all microgrids. The microgrids are thus not cooperating, since the economic ports are not considered. As can be observed, the prices in both microgrids are not equal at any steady-state, which entails suboptimal economic dispatch. Furthermore, the steady-state prices cannot be directly influenced, since it results from the grid-forming controllers in each microgrid, which may not have identical parameters.

In order to influence the prices in each microgrid and obtain global economic optimal dispatch, an interconnection via the economic ports as described in Section~\ref{sec:interconnection_econ} is employed in the next subsection. 

\begin{figure}[t]
	\centering
	\includegraphics{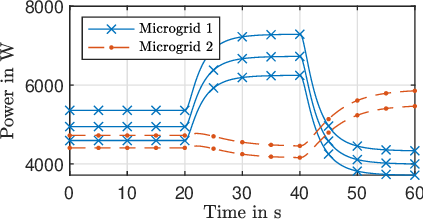}
	\caption{DGU injected power in both microgrids}
	\label{fig:elec_power}
\end{figure}

\begin{figure}[t]
	\centering
	\includegraphics{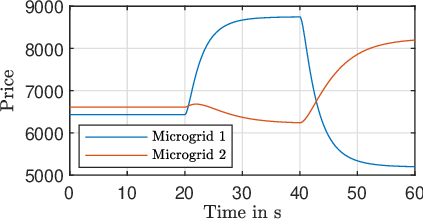}
	\caption{Local price $\lambda_\loc$ of both microgrids}
	\label{fig:elec_preis}
\end{figure}

\subsection{Economically interconnected microgrids}

In this subsection, the networked microgrids are equipped with the economic ports and the dynamic averaging as in Section~\ref{sec:interconnection_econ}. Theorem~\ref{th:econ} is verified for the same subset of the state space as in Section~\ref{sec:simulation_elec}.   

The power injection of the grid-forming \glspl{dgu} and the price in both microgrids are shown in Figures~\ref{fig:econ_power}~and~\ref{fig:econ_preis}, respectively. When a load step occurs, the prices in both microgrids vary. However, after a transient period, the prices in both microgrids are equal. This is due to the consensus algorithm~\eqref{eq:PDDA} interconnecting the microgrids via the economic port, which achieves a price consensus at steady state. The power injections of the grid-following \glspl{dgu} in Figure~\ref{fig:econ_power} adjusts automatically according to the price in the microgrids. Note that the \glspl{dgu} of Microgrid 2 inject less power than the \glspl{dgu} in Microgrid 1, since they have greater injected power costs (see Section~\ref{sec:simulation_scenario}). \glspl{dgu} with the same cost function (e.g. $f_3^1$ and $f_4^2$) inject the same amount of power in steady state despite being in different microgrids with different loads. Since at steady state a price consensus of both microgrids is achieved, we have optimal dispatch at steady state according to Proposition~\ref{lem:optimality}. Note that optimal dispatch is achieved solely by an exchange of the local prices in a distributed manner via the economic ports.

\begin{figure}[t]
	\centering
	\includegraphics[]{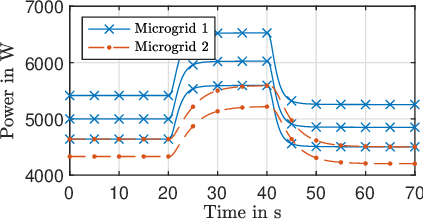}
	\caption{DGU inj. power in both microgrids interconnected economically}
	\label{fig:econ_power}
\end{figure}

\begin{figure}[t]
	\centering
	\includegraphics{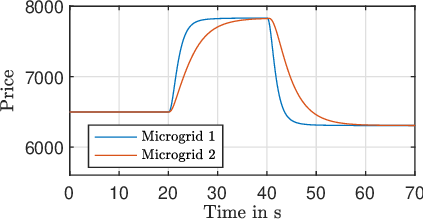}
	\caption{Local price $\lambda_\loc$ of both microgrids interconnected economically}
	\label{fig:econ_preis}
\end{figure}

\section{Conclusion}

In this paper, we proposed a price-based controller for a single microgrid achieving optimal economic dispatch. Furthermore, we proposed a novel method for studying stability of networked microgrids with an optimization-based controller. This method is based on the herein introduced electric and economic interconnection ports, whose passivity properties provide valuable insights for achieving modular stability guarantees. With the economic interconnection scheme proposed, globally optimal dispatch is achieved via the distributed communication of local prices.
Future work includes the consideration of constraints in the price-based controllers and the extension of the proposed methodology for AC microgrids.

\printbibliography

\end{document}